 \providecommand{\F}{\mathbb{F}}
\title[Subspace Designs]{Subspace Designs based on Algebraic Function Fields}
\author{Venkatesan Guruswami}
\address{Computer Science Department, Carnegie Mellon University, Pittsburgh, USA.}
\email{guruswami@cmu.edu}
\thanks{Research supported in part by NSF CCF-1422045.}
\author{Chaoping Xing}
\address{School of Physical and Mathematical Sciences, Nanyang Technological University, Singapore.}
\email{xingcp@ntu.edu.sg}
\author{Chen Yuan}
\address{School of Physical and Mathematical Sciences, Nanyang Technological University, Singapore.}
\email{yuan0064@e.ntu.edu.sg}
\date{}
\newtheorem{lemma}{Lemma}[section]
\newtheorem{theorem}[lemma]{Theorem}
\newtheorem{cor}[lemma]{Corollary}
\newtheorem{defn}{Definition}
\theoremstyle{remark}
\newtheorem{rmk}{Remark}
\newcommand{\eps}{\varepsilon}
\renewcommand{\epsilon}{\varepsilon}
\renewcommand{\le}{\leqslant}
\renewcommand{\ge}{\geqslant}
\def\ZZ{\mathbb{Z}}
\def\PP{\mathbb{P}}
\def \mH {\mathcal{H}}
\def \mF {\mathcal{F}}
\def \mI {\mathcal{I}}
\def \mL {\mathcal{L}}
\def \mV {\mathcal{V}}
\def \mW {\mathcal{W}}
\def\Pin{{P_{\infty}}}
\def \Xi {{X^{[i]}}}
\newcommand{\Ga}{\alpha}
\newcommand{\Gb}{\beta}
\newcommand{\Gg}{\gamma}     
\newcommand{\Gd}{\delta}     
\newcommand{\Ge}{\epsilon}
\newcommand{\Gl}{\lambda}    \newcommand{\GL}{\Lambda}
     \newcommand{\GO}{\Omega}
\newcommand{\Gs}{\sigma}
\def\g{{\mathfrak{g}}}
\def\bg{{\bf g}}
\def \bz {{\bf z}}
\def\supp {{\rm supp }}
\def\Aut {{\rm Aut }}
\def\End {{\rm End }}
\def\Gal {{\rm Gal }}
\newcommand{\dm}
\begin{document}

\maketitle
\begin{abstract}
Subspace designs are a (large) collection of high-dimensional subspaces $\{H_i\}$ of $\F_q^m$ such that for any low-dimensional subspace $W$, only a small number of subspaces from the collection have non-trivial intersection with $W$; more precisely, the sum of dimensions of $W \cap H_i$ is at most some parameter $L$. The notion was put forth by Guruswami and Xing (STOC'13) with applications to list decoding variants of Reed-Solomon and algebraic-geometric codes, and later also used for explicit rank-metric codes with optimal list decoding radius.

Guruswami and Kopparty (FOCS'13, Combinatorica'16) gave an explicit construction of subspace designs with near-optimal parameters. This construction was based on polynomials and has close connections to folded Reed-Solomon codes, and required large field size (specifically $q \ge m$). Forbes and Guruswami (RANDOM'15) used this construction to give explicit constant degree ``dimension expanders" over large fields, and noted that subspace designs are a powerful tool in linear-algebraic pseudorandomness.

Here, we construct subspace designs over any field, at the expense of a modest worsening of the bound $L$ on total intersection dimension. Our approach is based on a (non-trivial) extension of the polynomial-based construction to algebraic function fields, and instantiating the approach with cyclotomic function fields.  Plugging in our new subspace designs in the construction of Forbes and Guruswami yields dimension expanders over $\F^n$ for any field $\F$, with logarithmic degree and expansion guarantee for subspaces of dimension $\Omega(n/(\log \log n))$.

\iffalse
One of our motivations is the application of subspace designs to the construction of dimension expanders discovered by (Forbes and Guruswami, 2015). Their construction inherited the large field restriction of the underlying subspace designs, and standard field size reduction methods increased the degree and worsened the expansion guarantee to very small subspaces. Plugging in our new subspace designs in this connection yields dimension expanders with logarithmic degree over any field that expand subspaces of constant density.
\fi

\end{abstract}

\section{Introduction}

An emerging theory of ``linear-algebraic pseudorandomness'' studies
the linear-algebraic analogs of fundamental Boolean pseudorandom
objects where the rank of subspaces plays the role of the size of
subsets. A recent work~\cite{FG-random15} studied the
interrelationships between several such algebraic objects such as
subspace designs, dimension expanders, rank condensers, and
rank-metric codes, and highlighted the fundamental unifying role
played by \emph{subspace designs} in this web of connections.

Informally, a subspace design is a collection of subspaces of a vector
space $\F_q^m$ (throughout we denote by $\F_q$ the finite field with
$q$ elements) such that any low-dimensional subspace $W$ intersects
only a small number of subspaces from the collection. More precisely:
\begin{defn}
A collection $H_1,H_2,\dots,H_M$ of $b$-dimensional subspaces of $\F_q^m$ form an $(s,L)$-(strong) subspace design, if for every $s$-dimensional subspace $W \subset \F_q^m$, $\sum_{i=1}^M \dim(W \cap H_i) \le L $.
\end{defn}
In particular, this implies that at most $L$ subspaces $H_i$ have non-trivial intersection with $W$. A collection meeting this weaker requirement is called a \emph{weak} subspace design; unless we mention otherwise, by subspace design we always mean a strong subspace design in this paper. One would like the dimension $b$ of each subspace in the subspace design to be large, typically $\Omega(m)$
or applications of interest, $L$ to be small, and the number of subspaces $M$ to be large.

Subspace designs were introduced by the first two authors in \cite{GX-stoc13}, where they used them to improve the list size and efficiency of list decoding algorithms for algebraic-geometric codes, yielding efficiently list-decodable codes with optimal redundancy over fixed alphabets and small output list size. A standard probabilistic argument shows that a random collection of subspaces forms a good subspace design with high probability. Subsequently, Guruswami and Kopparty~\cite{GK-combinatorica} gave an explicit construction of subspace designs, nearly matching the parameters of random constructions, albeit over large fields.

Intriguingly, the construction in \cite{GK-combinatorica} was based on algebraic list-decodable codes (specifically folded Reed-Solomon codes). Recall that improving the list-decodability of such codes was the motivation for the formulation of subspace designs in the first place! This is yet another compelling example of the heavily intertwined nature of error-correcting codes and other pseudorandom objects.  The following states one of the main trade-offs achieved by the construction in \cite{GK-combinatorica}.

\begin{theorem}[Folded Reed-Solomon based construction~\cite{GK-combinatorica}]
\label{thm:intro-GK}
For every $\eps \in (0,1)$, positive integers $s,m$ with $s \le \eps m/4$, and a prime power $q > m$, there exists an explicit\footnote{By explicit, we mean a deterministic construction that runs in time $\text{poly}(q,m,M)$ and outputs a basis for each of the subspaces in the subspace design.}
collection of
$M = q^{\Omega(\epsilon m/ s)}$ subspaces in $\F_q^m$, each of dimension at least $(1-\epsilon) m$,
which form a $(s, \frac{2 s}{\epsilon})$-(strong) subspace design.
\end{theorem}

Note the requirement of the field size $q$ being larger than the ambient dimension $m$ in their construction. To construct subspace designs over small fields, they use a construction over a large extension field $\F_{q^r}$, and view $b$-dimensional subspaces of $\F_{q^r}^{m'}$ as $br$-dimensional subspaces of $\F_q^{rm'}$. However, this transformation need not preserve the ``strongness" of the subspace design, and an $(s,L)$-subspace design over the extension field only yields an $(s,L)$-weak subspace design over $\F_q$.

The strongness property is crucial for all the applications of subspace designs in \cite{FG-random15}. In particular, the strongness is what drives the construction of dimension expanders (defined below) of low degree. The weak subspace design property does \emph{not} suffice for these applications.

\begin{defn}
\label{def:dim-expanders}
A collection of linear maps $A_1,A_2,\dots,A_d : \F^n \to \F^n$ is said to be a $(b,\alpha)$-dimension expander if for every subspace $V$ of $\F^n$ of dimension at most $b$, $\dim(\sum_{i=1}^d A_i(V)) \ge (1+\alpha) \cdot \dim(V)$. The number of maps $d$ is the ``degree" of the expander, and $\alpha$ is the expansion factor.
\end{defn}

Using the subspace designs constructed in Theorem~\ref{thm:intro-GK} in a black-box fashion, Forbes and Guruswami~\cite{FG-random15} gave explicit $(\Omega(n),\Omega(1))$-dimension expanders of $O(1)$ degree when $|\F| \ge \mathrm{poly}(n)$. Here explicit means that the maps $A_i$ are specified explicitly, say by the matrix representing their action with respect to some fixed basis. Extending Theorem~\ref{thm:intro-GK} to smaller fields will yield constant-degree $(\Omega(n),\Omega(1))$-dimension expanders over all fields. The only known constructions of such dimension expanders over finite fields rely on monotone expanders~\cite{DW10,DS11}, a rather complicated (and remarkable) form of bipartite vertex expanders whose neighborhood maps are monotone. Even the existence of constant-degree monotone expanders does not follow from standard probabilistic methods, and the only known explicit construction is a sophisticated one using the group $\text{SL}_2(\mathbb{R})$ by Bourgain and Yehudayoff~\cite{BY13}. (Earlier, Dvir and Shpilka~\cite{DS11} constructed monotone expanders of logarithmic degree using Cayley graphs over the cyclic group, yielding logarithmic degree $(\Omega(n),\Omega(1))$-dimension expanders.)

In light of this, it is a very interesting question to remove the field size restriction in Theorem~\ref{thm:intro-GK} above, as it will yield an arguably simpler construction of constant-degree dimension expanders over every field, and which might also offer a quantitatively better trade-off between the degree and expansion factor. We note that probabilistic constructions achieve similar parameters (in fact a slightly larger sized collection with $q^{\Omega(\eps m)}$ subspaces) with no restriction on the field size (one can even take $q=2$).

\medskip \noindent {\bf Our construction.} The large field size in
Theorem~\ref{thm:intro-GK} was inherited from Reed-Solomon codes,
which are defined over a field of size at least the code length. Our
main contribution in this work is a construction of subspace designs
based on algebraic function fields, which permits us to construct
subspace designs over small fields. By instantiating this approach
with a construction based on cyclotomic function fields, we are able
to prove the following main result in this work:

\begin{theorem}[Main Theorem]\label{thm:1.1}
For every $\epsilon\in (0,1)$, a prime power $q$ and positive integers $s,m$ such that $s\leq \epsilon m/4$, there exists an explicit construction of $M= \Omega(q^{\lfloor \epsilon m/(2s)\rfloor}/\epsilon)$ subspaces in $\F_q^m$, each of dimension at least $(1-\epsilon) m$, which form an $\Bigl(s',\frac{2 s'\lceil\log_q(m)\rceil}{\epsilon}\Bigr)$-strong subspace design for all $s' \le s$.
\end{theorem}

Note that we state a slightly stronger property that the bound on intersection size improves for subspaces of lower dimension $s' \le s$. This property also holds for Theorem~\ref{thm:intro-GK} and in fact is important for the dimension expander construction in \cite{FG-random15}, and so we make it explicit.

%The bound on intersection size we guarantee above is unfortunately worse than the one in Theorem~\ref{thm:intro-GK} by a factor of $\log_q m$.
The bound on intersection size we guarantee above is worse than the
one from the random construction by a factor of $\log_q m$. The result
of Theorem~\ref{thm:intro-GK} can be viewed as a special case of
Theorem \ref{thm:1.1} since $\log_qm\le 1$ when $q>m$. The factor
$\log_q m$ comes out as a trade-off of the explicit construction vs
the random construction given in \cite{GX-stoc13}.  The extension field
based construction using Theorem~\ref{thm:intro-GK} would yield an $(s,
O(s^2/\eps))$-subspace design (since an $(s,L)$-weak subspace design is
trivially an $(s,sL)$-(strong) subspace design).  The bound we achieve
is better for all $s = \Omega(\log_q m)$. In the use of subspace designs
in the dimension expander construction of \cite{FG-random15}, $s$ governs the dimension of the subspaces which are guaranteed to expand, which we would like to be large (and ideally $\Omega(m)$).
The application of subspace designs to list
decoding~\cite{GX-stoc13,GWX-ieeetit} employs the parameter choice
$m=O(s)$ in order keep the alphabet size $q^m$ small. Therefore, our
improvement applies to a meaningful setting of
parameters that is important for the known applications of (strong) subspace designs.

\medskip \noindent {\bf Application to dimension expanders over small fields.}
By plugging in the subspace designs of Theorem~\ref{thm:1.1} into the dimension expander construction of \cite{FG-random15}, we can get the following:
\begin{theorem}
\label{thm:intro-subs-design}
For every prime power $q$ and positive integer $n \ge q$, there exists an explicit construction of a $\Bigl(b=\Omega\bigl( \frac{n}{\log_q \log_q n} \bigr), 1/3\Bigr)$-dimension expander with $O(\log_q n)$ degree.
\end{theorem}

For completeness, let us very quickly recap how such dimension
expanders may be obtained from the subspace designs of
Theorem~\ref{thm:1.1}, using the ``tensor-then-condense" approach in
\cite{FG-random15}. We begin with linear maps $T_1,T_2 : \F^n \to
\F^{2n}$, where $T_1(v) = (v;0)$ and $T_2(v) = (0;v)$ --- these
trivially achieve expansion factor $2$ by doubling the ambient
dimension. Then we take the subspace design of Theorem~\ref{thm:1.1}
with $m=2n$, $\eps=1/2$, $s=2b$, and $M = 12 \lceil \log_q m \rceil$
subspaces $H_i$ (if $b = \beta n/(\log_q\log_q n)$ for small enough
absolute constant $\beta > 0$, Theorem~\ref{thm:1.1} guarantees these
many subspaces). Let $E_i : \F^{2n} \to \F^n$ be linear maps such that
$H_i = \mathrm{ker}(E_i)$. The dimension expander consists of the $2M$
composed maps $E_i \circ T_j$ for $i=1,2,\dots,M$ and $j=1,2$.
Briefly, the analysis of the expansion in dimension proceeds as
follows. Let $V$ be a subspace of $\F^n$ with $\dim(V) = \ell \le b$,
and let $W = T_1(V) + T_2(V)$ be the $2\ell$-dimensional subspace of
$\F^{2n}$ after the tensoring step. The strong subspace design
property implies that the number of maps $E_i$ for which $\dim(E_i W)
< 4\ell/3$ --- which is equivalent to $\dim(W \cap H_i) > 2\ell/3$ ---
is less than $12 \lceil \log_q m\rceil = M$. So there must be an $i$
for which $\dim(E_i W) \ge 4\ell/3$, and this $E_i$ when composed with
$T_1$ and $T_2$ will expand $V$ to a subspace of dimension at least
$\frac{4}{3} \dim(V)$.

By using a method akin to the conversion of Reed-Solomon codes over
extension fields to BCH codes over the base field, applied to the
large field subspace designs of Theorem~\ref{thm:intro-GK}, Forbes and
Guruswami~\cite{FG-random15} constructed $(\Omega(n/\log n),
\Omega(1))$-dimension expanders of $O(\log n)$ degree. In contrast,
our construction here guarantees expansion for dimension up to
$\Omega(n/(\log \log n))$.  The parameters offered by
Theorem~\ref{thm:intro-subs-design} are, however, weaker than both the
construction given in \cite{DS11}, which has logarithmic degree but
expands subspaces of dimension $\Omega(n)$, as well as the one in
\cite{BY13}, which further gets constant degree. However, we do not go
through monotone expanders which are harder to construct than vertex
expanders, and our construction works fully within the
linear-algebraic setting. We hope that the ideas in this work pave the
way for a subspace design similar to Theorem~\ref{thm:intro-GK} over
small fields, and the consequent construction of constant-degree
$(\Omega(n),\Omega(1))$-dimension expanders over all fields. In fact,
all that is required for this is an $(s,O(s))$-subspace design with a
sufficiently large constant number of subspaces, each of dimension
$\Omega(m)$.
%This will imply constant-degree $(c n, \alpha)$-dimension expanders for some $c,\alpha$ bounded away from $0$.

%Actually the extra $\log m$ factor prevents us improving the dimension expanders over small fields from \cite{FG-random15}; specifically, we would like to improve the logarithmic degree of the expander to $O(1)$ and the inverse logarithmic limit on the (fractional) dimension of subspaces that expand to $\Omega(1)$. We hope though that our work paves the way for a subspace design with parameters similar to Theorem~\ref{thm:intro-GK} over all fields, and the consequent improvement to dimension expanders.

\medskip \noindent {\bf Construction approach.}
The generalization of the polynomials-based subspace design from \cite{GK-combinatorica} to take advantage of more general algebraic function fields is not straightforward. The natural approach would be to replace the space of low-degree polynomials by a Riemann-Roch space consisting of functions of bounded pole order $\ell$ at some place. We prove that such a construction can work, provided the
degree $\ell$ is less than the degree of the field extension (and some other mild condition is met, see Lemma~\ref{lem:3.2}).
However, this degree restriction is a severe one, and the dimension of the associated Riemann-Roch space will typically be too small (as the ``genus" of the function field, which measures the degree minus dimension ``defect," will be large), unless the field size is large. Therefore, we don't know an instantiation of this approach that yields a family of good subspace designs over a fixed size field.

Let us now sketch the algebraic crux of the polynomial based construction in \cite{GK-combinatorica}, and the associated challenges in extending it to other function fields.  The core property of a dimension $s$ subspace $W$ of polynomials underlying the construction of Theorem~\ref{thm:intro-GK} is the following: If $f_1,f_2,\dots,f_s \in \F_q[X]$ of degree less than $q-1$ are linearly independent over $\F_q$ (these $s$ polynomials being a basis of the subspace $W$), then the ``folded Wronskian," which is the determinant of the matrix $M(f_1,f_2,\dots,f_s)$ whose $i,j$'th entry is $f_j(\gamma^{i-1} X)$, is a \emph{nonzero} polynomial in $\F_q[X]$.  Here $\gamma$ is an arbitrary primitive element of $\F_q$. One might compare this with the classical Wronskian criterion for linear dependence over characteristic zero fields (and also holds when characteristic is bigger than the degree of the $f_i$'s), based on the singularity of the $s \times s$ matrix whose $i,j$'th entry is $\frac{d^{i-1} f_j}{d X^{i-1}}$.

One approach is to prove this claim about the folded Wronskian is via a ``list size" bound from list decoding: one can prove that for any $A_1,\dots,A_s \in \F_q[X]$, not all $0$, the space of solutions $f \in \F_q[X]_{< (q-1)}$ to
\begin{equation}
\label{eq:frs-eqn}
A_1(X) f(X) + A_2(X) f(\gamma X) + \cdots + A_s(X) f(\gamma^{s-1} X) = 0
\end{equation}
%\sum_{i=1}^s A_i(X) f(\gamma^{i-1} X) = 0$
has dimension at most $s-1$. (This was the basis of the linear-algebraic list decoding algorithm for folded Reed-Solomon codes~\cite{Gur-ccc11,GW-merged}.) Stating the contrapositive, if $f_1,f_2,\dots,f_s$ are linearly dependent over $\F_q[X]$, then the rows of the matrix $M(f_1,f_2,\dots,f_s)$ are linearly independent, and therefore its determinant, the folded Wronskian, is a nonzero polynomial. On the other hand, being the determinant of an $s \times s$ matrix whose entries are degree $m$ polynomials, the folded Wronskian has degree at most $ms$. To prove the subspace design property, one then establishes that for each subspace $H_i$ in the collection that intersects $W=\text{span}(f_1,\dots,f_s)$, the determinant picks up a number of distinct roots each with $\dim(W \cap H_i)$ multiplicity, the set of roots for different intersecting $H_i$ being disjoint from each other. The total intersection bound then follows because the folded Wronskian has at most $ms$ roots, counting multiplicities.

One can try to mimic the above approach for folded algebraic-geometric (AG) codes, with $f^\sigma$ for some suitable automorphism $\sigma$ playing the role of the shifted polynomial $f(\gamma X)$. This, however, runs into significant trouble, as the bound on number of solutions $f$ to the functional equation analogous  to \eqref{eq:frs-eqn}, $A_1 f  + A_2 f^\sigma + \cdots + A_s f^{\sigma^{s-1}} = 0$,
is much higher. The list of solutions is either exponentially large and needs pruning via pre-coding the folded AG codes with subspace-evasive sets~\cite{GX-stoc12}, or it is much bigger than $q^{s-1}$ in the constructions based on cyclotomic function fields and narrow ray class fields where the folded AG codes work directly~\cite{Gur-cyclo-ANT,GX-jcta}.

Let $F/K$ be a function field where the extension is Galois with Galois group generated by an automorphism $\Gs$. We choose the $m$-dimensional ambient space $\mathcal{V} \cong \F_q^m$ to be a carefully chosen subspace of a Riemann-Roch space in $F$ of degree $\ell \gg m$ (specifically, we require $\ell \ge m+2\g$ where $\g$ is the genus). We then establish that if $f_1,f_2,\dots,f_s \in \mathcal{V}$ are linearly independent over $\F_q$, a certain ``automorphism Moore matrix" $M_{\Gs}(f_1,f_2,\dots,f_s)$ (Definition~\ref{def:auto-moore}) is non-singular. The determinant of this Moore matrix is thus a non-zero function in $F$, and this generalizes the folded Wronskian criterion for polynomials mentioned above.

This non-singularity result is proved in two steps. First, we show that for functions in $\mathcal{V}$, linear independence over $\F_q$ implies linear independence over $K$. Then we show that for any $f_1,\dots,f_s \in F$ that are linearly independent over $K=F^\sigma$, the automorphism Moore matrix associated with $\sigma$  is non-singular. With our hands on the non-zero function $\Delta=\mathsf{det}(M_{\sigma}(f_1,f_2,\dots,f_s))$, we can proceed as in the folded Reed-Solomon case --- the part about $\Delta$ picking up many zeroes whenever a subspace in the collection intersects $\text{span}(f_1,\dots,f_s)$ also generalizes. The pole order of $\Delta$, however, is now $\ell s$ instead of $m s$ in the polynomial-based construction. This is the cause for the worse bound
on total intersection dimension in our Theorem~\ref{thm:1.1}.

\medskip \noindent \textbf{Organization.} We begin with a quick review
of background on algebraic function fields in general and cyclotomic
function fields in particular in Section~\ref{sec:2}. We also elaborate on the
the complexity aspects of computing bases of Riemann-Roch spaces and
evaluating functions at high degree places in cyclotomic function
fields --- this implies that our subspace designs can be constructed
in polynomial time. We present and analyze our constructions of subspace designs
from function fields in Section~\ref{sec:3} --- we give two criteria
that enable our construction, Lemmas \ref{lem:3.1}, \ref{lem:3.2},
though the former is the more useful one for us. In
Section~\ref{sec:4}, we instantiate our construction with specific
cyclotomic function fields and derive our main consequence for
subspace designs and establish Theorem~\ref{thm:1.1}. For reasons of space, several of the technical proofs appear are deferred to appendices.

\section{Preliminaries on function fields}\label{sec:2}

{\bf Background on function fields.}
Throughout this paper, $\F_q$ denotes the finite field of $q$ elements. A function field $F$ over $\F_q$ is a field extension over $\F_q$ in which there exists an  element $z$ of $F$ that is transcendental over $\F_q$ such that $F/\F_q(z)$ is a finite extension. $\F_q$ is called the full constant field of $F$ if the algebraic closure of $\F_q$ in $F$ is $\F_q$ itself. In this paper, we always assume that $\F_q$ is the full constant field of $F$, denoted by $F/\F_q$.

Each discrete valuation $\nu$ from $F$ to $\ZZ\cup\{\infty\}$ defines a local ring $O=\{f\in F:\; \nu(f)\ge 0\}$. The maximal ideal $P$ of $O$ is called a {\it place}. We denote  the valuation $\nu$ and the local ring $O$ corresponding to $P$ by $\nu_P$ and $O_P$, respectively. The residue class field $O_P/P$, denoted by $F_P$, is a finite extension of $\F_q$. The extension degree $[F_P:\F_q]$ is called {\it degree} of $P$, denoted by $\deg(P)$.

Let $\PP_F$ denote the set of places of $F$. A divisor $D$ of $F$ is a formal sum $\sum_{P\in\PP_F}m_PP$, where $m_P\in\ZZ$ are equal to $0$ except for finitely many $P$. The degree of $D$ is defined to be $\deg(D)=\sum_{P\in\PP_F}m_P\deg(P)$. We say that $D$ is positive, denoted by $D\ge 0$, if $m_P\ge 0$ for all $P\in\PP_F$. For a nonzero function $f$, the principal devisor $(f)$ is defined to be $\sum_{P\in\PP_F}\nu_P(f)P$. Then the degree of the principal divisor $(f)$ is $0$.  The Riemann-Roch space associated with a divisor $D$, denoted by $\mL(D)$, is defined by
\begin{equation}\label{eq:2.1}
\mL(D):=\{f\in F\setminus\{0\}:\; (f)+D\ge 0\}\cup\{0\}.
\end{equation}
Then $\mL(D)$ is a finite dimensional space over $\F_q$. By the Riemann-Roch theorem \cite{stich-book}, the dimension of $\mL(D)$, denoted by $\dim_{\F_q}(D)$, is lower bounded by $\deg(D)-\g+1$, i.e., $\dim_{\F_q}(D)\ge \deg(D)-\g+1$, where $\g$ is the genus of $F$. Furthermore, $\dim_{\F_q}(D)= \deg(D)-\g+1$ if $\deg(D)\ge 2\g-1$. In addition, we have the following results \cite[Lemma 1.4.8 and Corollary 1.4.12(b)]{stich-book}:
\begin{itemize}
\item[(i)]
If $\deg(D)<0$, then $\dim_{\F_q}(D)= 0$;
\item[(ii)] For a positive divisor $G$, we have $\dim_{\F_q}(D)-\dim_{\F_q}(D-G)\le \deg(G)$, i.e., $\dim_{\F_q}(D-G)\ge \dim_{\F_q}(D)-\deg(G)$.
\end{itemize}

Let $\Aut(F/\F_q)$ denote the set of automorphisms of $F$ that fix every element of $\F_q$, i.e., \[\Aut(F/\F_q)=\{\tau:\; \mbox{$\tau$ is an automorphism of $F$ and $\Ga^{\tau}=\Ga$ for all $\Ga\in\F_q$}\}.\] For a place $P\in\PP_F$ and an automorphism $\Gs\in\Aut(F/\F_q)$, we denote by $P^{\Gs}$ the set $\{f^{\Gs}:\; f\in P\}$. Then $P^{\Gs}$ is a place and moreover we have $\deg(P^{\Gs})=\deg(P)$. The place  $P^{\Gs}$ is called a conjugate place of $P$. $\Gs$ also induces an automorphsim of $\Aut(\F_P/\F_q)$. This implies that there exists an integer $e\ge 0$ such that $\Ga^{\Gs}=\Ga^{q^e}$ for all $\Ga\in\F_P$. $\Gs$ is called the {\it Frobenius} of $P$ if $e=1$, i.e., $\Ga^{\Gs}=\Ga^{q}$ for all $\Ga\in\F_P$. For a place $P$ and  a function $f\in O_P$, we denote by $f(P)$ the residue class of $f$ in $F_P$. Thus, we have $(f(P))^{q^e}=(f(P))^{\Gs}=f^{\Gs}(P^{\Gs})$.

\medskip \noindent {\bf Background on cyclotomic function fields.}
Let $x$ be a transcendental element over $\F_q$ and denote by $K$ the
rational function field $\F_q(x)$. Let $K^{ac}$ be an algebraic
closure of $K$. Denote by $\F_q[x]$ the polynomial ring $\F_q[x]$. Let
$\End(K^{ac})$ be the ring homomorphism from $K^{ac}$ to $K^{ac}$. We
define $\rho_{x}(z)=z^q+xz$ for all $z\in K^{ac}$. For $i\ge 2$, we
define $\rho_{x^i}(z)=\rho_x(\rho_{x^{i-1}}(z))$. For a polynomial
${p(x)}=\sum_{i=0}^na_ix^i\in\F_q[x]$, we define
$\rho_{p(x)}(z)=\sum_{i=0}^na_i\rho_{x^i}(z)$. For simplicity, we
denote $\rho_{p(x)}(z)$ by $z^{p(x)}$. It is easy to see that
$z^{p(x)}\in \F_q[x][z]$ is a $q$-linearized polynomial in $z$ of
degree $q^d$, where $d=\deg({p(x)})$.

For a polynomial ${p(x)}\in\F_q[x]$ of degree $d$, define the set
\begin{equation}\label{eq:2.2}
\GL_{p(x)}:=\{\Ga\in K^{ac}:\; \Ga^{p(x)}=0\}.
\end{equation}
Then $\GL_{p(x)}\simeq \F_q[x]/({p(x)})$ is an $\F_q[x]$-module and it has exactly $q^d$ elements. Furthermore, $\GL_{p(x)}$ is a cyclic $\F_q[x]$-module. For any generator $\Gl$ of $\GL_{p(x)}$, one has $\GL_{p(x)}=\{\Gl^A:\; A\in \F_q[x]/({p(x)})\}$ and $\Gl^A$ is a generator of $\GL_{p(x)}$ if and only if $\gcd(A,{p(x)})=1$. The extension $K(\Gl)=K(\GL_{p(x)})$ is a Galois extension over $K$   with $\Gal(K(\GL_{p(x)})/K)\simeq (\F_q[x]/{p(x)})^*$, where $(\F_q[x]/{p(x)})^*$ is the unit group of the ring $\F_q[x]/({p(x)})$. We use $\Gs_A$ to denote the automorphism of $\Aut(K(\Gl)/K)$ corresponding to $A$, i.e., $\Gl^{\Gs_A}=\Gl^A$.
The size of $(\F_q[x]/{p(x)})^*$ is denoted by $\Phi({p(x)})$. If ${p(x)}$ is an irreducible polynomial of degree $d$ over $\F_q$, we have $\Phi({p(x)})=q^d-1$. In this case, the extension $K(\GL_{p(x)})/K$ is cyclic and $\Gal(K(\GL_{p(x)})/K)\simeq (\F_q[x]/{p(x)})^*\simeq\F_{q^d}^*$.

From now on in this subsection, we assume that ${p(x)}$ is a monic irreducible polynomial of degree $d$ over $\F_q$. The infinite place $\infty$ of $K$ splits into $(q^d-1)/(q-1)$ places of degree $1$ in $K(\GL_{p(x)})$, each having ramification index $q-1$. The zero place of ${p(x)}$ is totally ramified in $K(\GL_{p(x)})/K$. Furthermore, a monic irreducible polynomial $h(x)\neq {p(x)}$ of $\F_q[x]$ is unramified and splits into $s$ places of degree $r\deg(h)$, where $r$ is the order of $h(x)$ in the unit group $(\F_q[x]/{p(x)})^*$ and $s=(q^d-1)/r$. This implies that the zero place of $x$ in totally inert in $K(\GL_{p(x)})/K$ if ${p(x)}\neq x$ is a monic \emph{primitive} polynomial.

\begin{lemma}\label{lem:2.1}\cite{hayes,rosen}
Let ${p(x)}$ be a monic irreducible polynomial of degree $d$ and let
$\Gl$ be a generator of $\GL_{p(x)}$. Then $\Gl$ is a local parameter
of the unique place ${P'}$ of $K(\GL_{p(x)})$ lying over ${p(x)}$,
i.e., $\nu_{{P'}}(\Gl)=1$. Furthermore, let $O_{K(\GL_{p(x)})}$ be the
integral closure of $\F_q[x]$ in $K(\GL_{p(x)})$. Then
$\{1,\Gl,\dots,\Gl^{m-1}\}$ is an integral basis of
$O_{K(\GL_{p(x)})}$ over $\F_q[x]$, where $m=q^d-1$.
\end{lemma}

%The proofs of the next two lemmas appear in Appendix~\ref{app:cyclo}.
Let $\infty$ denote the pole place of $x$ in $K$. The following lemma determines the principal divisor of a generator of $\GL_{p(x)}$.
\begin{lemma}\label{lem:2.2} Let  ${p(x)}$ be a monic irreducible polynomial of degree $d$ and let $\Gl$ be a generator of $\GL_{p(x)}$. Then the principal divisor $(\Gl)$ is equal to
\begin{equation}\label{eq:2.3}
(\Gl)=P'+\sum_{i=1}^{d}\sum_{j=1}^{q^{i-1}}((q-1)(d-i)-1)\infty_{ij},
\end{equation}
where $P'$ is the unique place of  $K(\GL_{p(x)})$ lying over the zero of ${p(x)}$ and $\{\infty_{ij}\}_{1\le i\le d,1\le j\le q^{i-1}}$ is the set of all places of $K(\GL_{p(x)})$ lying over $\infty$ of $\F_q(x)$.
\end{lemma}
\begin{proof} Let us first look at the poles of $\Gl$. Write $\Gl^{p(x)}/\Gl=\sum_{i=0}^d{p(x)\brack i}\Gl^{q^i-1}$, where ${p(x)\brack i}$ denotes the coefficient of $\Gl^{q^i-1}$. Then ${p(x)\brack i}$ is a polynomial in $x$ of degree $q^i(d-i)$.
If a place $Q$ of $K(\GL_{p(x)})$ does not lie over $\infty$ of $K$, we claim that $\nu_Q(\Gl)\ge 0$. Otherwise, one would have $\nu_Q(\Gl^{q^d-1})<\nu_Q\left({p(x)\brack i}\Gl^{q^i-1}\right)$ for all $i\le 0\le d-1$.  This is impossible as $\sum_{i=0}^d{p(x)\brack i}\Gl^{q^i-1}=0$.

By \cite[Theorem 3.2]{hayes}, we know that there exists a place $Q$ of
$K(\GL_{p(x)})$ lying over $\infty$ of $K$ such that $\nu_Q(\Gl)=-1$
and $\nu_Q(\Gl^A)=(q-1)(d-i)-1$ for any polynomial $A\in\F_q[x]$ of
degree $i-1\le d-1$. This implies that for a polynomial $A$ of degree
$i-1\le d-1$ with $\gcd(A,p(x))=1$, one has $\nu_R(\Gl)=(q-1)(d-1)-1$
for $R=Q^{\Gs_B}$, where $B$ is the unique polynomial of degree $<d$
satisfying $AB\equiv 1\bmod{p(x)}$. When $A$ runs through all
polynomials in $(\F_q[x]/(p(x)))^*$, $\Gs_B$ runs through all
conjugate places lying over $\infty$.  This means that there are
exactly $q^{i-1}$ places $R$ lying over $\infty$ with
$\nu_R(\Gl)=(q-1)(d-i)-1$ since there are $q^{i-1}$ monic polynomials
of degree $i-1$ in $(\F_q[x]/(p(x)))^*$. Hence, the divisor
$\sum_{i=1}^{d}\sum_{j=1}^{q^{i-1}}((q-1)(d-i)-1)\infty_{ij}$ appears
as part of the principal divisor $(\Gl)$. The desired result follows
from the following facts: (i) $\Gl$ has no poles other than those
lying $\infty$; (ii) $\Gl$ is a local parameter of $P'$; and (iii)
$\deg\left(\sum_{i=1}^{d}\sum_{j=1}^{q^{i-1}}((q-1)(d-i)-1)\infty_{ij}\right)=-d$. This
completes the proof.
\end{proof}

Now we show that every element in the Riemann-Roch space $\mL(\ell P')$ has a unique representation of certain form.

\begin{lemma}\label{lem:2.3} Let  ${p(x)}$ be a monic irreducible polynomial of degree $d$ and let $\Gl$ be a generator of $\GL_{p(x)}$. Let $P'$ be the place of  $K(\GL_{p(x)})$ lying over ${p(x)}$. Then every nonzero element $f$ of $\mL(\ell P')$ can be uniquely written as $f={p(x)}^{-e}\sum_{i=0}^{m-1}A_i\Gl^i$
for some $e\ge 0$, where $A_i$ are polynomials of $\F_q[x]$ and not all of them are divisible by ${p(x)}$. Furthermore, $\deg(A_i)\le (m-1)/(q-1)+de+d/2$ for all $0\le i\le m-1$.
\end{lemma}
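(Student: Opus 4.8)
\medskip\noindent\textbf{Proof plan.}

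\emph{Existence and uniqueness.} The plan is to transfer the problem to the integral closure $O_{K(\GL_{p(x)})}$ of $\F_q[x]$, whose integral basis $\{1,\Gl,\dots,\Gl^{m-1}\}$ is given by Lemma~\ref{lem:2.1}. Since the zero of $p(x)$ is totally ramified of index $m=q^d-1$ in $K(\GL_{p(x)})/K$, we have $\nu_{P'}(p(x))=m$; as the only pole of $f\in\mL(\ell P')$ is at $P'$, the function $g:=p(x)^ef$ has no pole at any place over a finite place of $K$ once $em\ge-\nu_{P'}(f)$, so $g\in O_{K(\GL_{p(x)})}$. Expanding $g=\sum_{i=0}^{m-1}A_i\Gl^i$ with $A_i\in\F_q[x]$ yields $f=p(x)^{-e}\sum_iA_i\Gl^i$; if $p(x)$ divides all $A_i$ we may lower $e$ by $1$ (replacing $g$ by $g/p(x)\in O_{K(\GL_{p(x)})}$), so after finitely many steps not all $A_i$ are divisible by $p(x)$. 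For uniqueness, note $\nu_{P'}(A_i\Gl^i)=m\,\nu_{p(x)}(A_i)+i$ (using $\nu_{P'}(\Gl)=1$ and $\nu_{P'}|_K=m\,\nu_{p(x)}$), and these are pairwise distinct modulo $m$ for $0\le i\le m-1$; hence $\nu_{P'}(g)=\min_i(m\,\nu_{p(x)}(A_i)+i)$, which is $<m$ exactly when some $A_i$ is not divisible by $p(x)$, so $e=\lceil-\nu_{P'}(f)/m\rceil$ is forced, and then uniqueness of the integral-basis expansion forces the $A_i$.

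\emph{Degree bound.} Let $h(T)=\rho_{p(x)}(T)/T=\sum_{r=0}^d{p(x)\brack r}T^{q^r-1}$ be the minimal polynomial of $\Gl$ over $K$ (degree $m$, with ${p(x)\brack 0}=p(x)$ and ${p(x)\brack d}=1$). Differentiating $\rho_{p(x)}(T)=T\,h(T)$ and using $\rho_{p(x)}'(T)=p(x)$ gives $h'(\Gl)=p(x)/\Gl$, so the trace-dual basis $\{\mu_0,\dots,\mu_{m-1}\}$ of $\{1,\Gl,\dots,\Gl^{m-1}\}$, determined by $\sum_i\mu_iT^i=h(T)/\bigl((T-\Gl)h'(\Gl)\bigr)$, equals $\mu_i=\frac{1}{p(x)}\sum_{r:\,q^r>i+1}{p(x)\brack r}\,\Gl^{q^r-1-i}$. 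Since $A_i=\mathrm{Tr}_{K(\GL_{p(x)})/K}(g\mu_i)\in\F_q[x]$ and the places over $\infty$ form one conjugacy class with ramification index $q-1$, one has $\nu_\infty(\mathrm{Tr}(h))\ge\frac{1}{q-1}\min_{P\mid\infty}\nu_P(h)$ for every $h$, hence $\deg_xA_i=-\nu_\infty(A_i)\le\frac{1}{q-1}\max_{P\mid\infty}(-\nu_P(g\mu_i))$. Finally $\nu_P(g)\ge-e(q-1)d$ for $P\mid\infty$ (as $f$ is regular there and $\nu_P(p(x))=-(q-1)d$), and a short computation from the valuations $\nu_P(\Gl)=(q-1)(d-s)-1$ of Lemma~\ref{lem:2.2} together with $\deg_x{p(x)\brack r}=q^r(d-r)$ shows that the pole order of $\mu_i$ at any $P\mid\infty$ is dominated by its $r=d$ term $\Gl^{q^d-1-i}/p(x)$ (whose coefficient ${p(x)\brack d}=1$ has $x$-degree $0$), giving $\nu_P(\mu_i)\ge(q-1)d+i-m$. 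Combining,
\[
\deg_xA_i\ \le\ (e-1)d+\frac{m-i}{q-1}\ \le\ \frac{m-1}{q-1}+de+\frac{d}{2},
\]
the last inequality because $\frac{1-i}{q-1}\le\frac{3d}{2}$ for all $i\ge0$.

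\emph{Main obstacle.} The substance lies in the degree bound. One cannot simply read $\deg_xA_i$ off the valuation of $g=\sum_iA_i\Gl^i$ at one infinite place, because the valuations $\nu_P(A_i\Gl^i)$ of the summands need not be pairwise distinct, so no single coordinate is isolated; routing through the trace-dual basis converts this into a bound on the single pole order $-\nu_P(g\mu_i)$, after which only bookkeeping with the explicit divisor of $\Gl$ from Lemma~\ref{lem:2.2} remains. The one point needing care there is that the dominant monomial of $\mu_i$ is $\Gl^{q^d-1-i}/p(x)$ with a degree-$0$ coefficient: a coarser estimate of the coefficients ${p(x)\brack r}$ would introduce a spurious term of order $q^{d-1}$ and destroy the bound. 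The estimate we arrive at is in fact slightly stronger than the stated one, the slack $d/2$ comfortably absorbing the rounding.
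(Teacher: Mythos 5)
Your existence and uniqueness argument follows the same route as the paper: clear denominators to land in $O_{K(\GL_{p(x)})}$, expand in the integral basis from Lemma~\ref{lem:2.1}, and use the total ramification of $P'$ over $p(x)$ (so $\nu_{P'}\!\restriction_K=m\,\nu_{p(x)}$, $\nu_{P'}(\Gl)=1$) to pin down $e$; your uniqueness argument via the distinctness of $m\,\nu_{p(x)}(A_i)+i$ modulo $m$ is a cleaner write-up of what the paper leaves implicit, and it is correct.

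For the degree bound you take a genuinely different route. The paper applies Cramer's rule to the Galois-conjugate system $C\mathbf{A}=\bg$ with $C_{k,l}=\Gs^k(\Gl^l)$, writes $A_i=\det(C_i)/\det(C)$, and then estimates the valuations of the two determinants at the places over $\infty$. You instead observe that $A_i=\mathrm{Tr}(g\mu_i)$ for the trace-dual basis, compute $\mu_i$ in closed form from Euler's formula using $\rho_{p(x)}'(T)=p(x)$ so that $h'(\Gl)=p(x)/\Gl$, and then bound $-\nu_P(g\mu_i)$ place by place; the averaging step $\nu_\infty(\mathrm{Tr}(h))\ge\frac1{q-1}\min_{P\mid\infty}\nu_P(h)$ replaces the determinant estimate. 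I checked your formula for $\mu_i$ and the valuation computation: for a place $P=\infty_{sj}$ with $\nu_P(\Gl)=(q-1)(d-s)-1$, the $r$-th term of $\mu_i$ has valuation $-(q-1)q^r(d-r)+(q-1)d+(q^r-1-i)\bigl((q-1)(d-s)-1\bigr)$, and using $q^{d-r}-1\ge(q-1)(d-r)$ together with the restriction $q^r>i+1$ one verifies $\nu_P(\mu_i)\ge(q-1)d+i-m$ for every $P\mid\infty$, with equality at the places where $\Gl$ has a pole. This gives $\deg A_i\le(e-1)d+\frac{m-i}{q-1}$, which is in fact slightly sharper than the bound stated in the lemma.

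The two routes are of course dual to each other (the trace against the dual basis is exactly Cramer's rule for the conjugate matrix), but your version has a concrete advantage: the explicit $\mu_i$ makes all valuations visible, whereas the paper's determinant bookkeeping as printed has some rough edges --- for instance, the stated inequality $\nu_{\infty_{ij}}(\Gs^k(\Gl^l))\ge-1$ cannot literally hold for $l>1$, and the discriminant $\det(C)^2$ is a power of $p(x)$ of degree roughly $d(m-1)$, not $d$, so the contribution of $\det(C)$ is larger than what is written. Those discrepancies ultimately wash out in the ratio $\det(C_i)/\det(C)$, and your explicit dual-basis computation bypasses the issue entirely, since the $1/p(x)$ factor in $\mu_i$ already carries the discriminant information in normalized form. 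Altogether: correct, and a cleaner execution of the same strategy.
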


\begin{proof} If $f\in\F_q$, it is clearly true. Now let $f\in\mL(\ell P')\setminus\F_q$. Let $\nu_{P'}(f)=-r<0$ and put $e=\lceil \frac{r}m\rceil$. Then $0\le \nu_{P'}({p(x)}^ef)<m$. Thus, ${p(x)}^ef$ belongs to $O_{K(\GL_{p(x)})}$. By Lemma \ref{lem:2.1}, there exists a set $\{A_i\}_{i=0}^{m-1}$ of $\F_q[x]$ such that ${p(x)}^ef=\sum_{i=0}^{m-1}A_i\Gl^i$. We claim that not all $A_i$ are divisible by ${p(x)}$. Otherwise we would have $\nu_{P'}({p(x)}^{-e}\sum_{i=0}^{m-1}A_i\Gl^i)\ge -em+m>-r$ and this is a contradiction.

Put $g={p(x)}^ef$. Let $\Gs$ be a generator of the Galois group $\Gal(K(\GL_{p(x)})/K)$. Define $\bg=(g,g^{\Gs},\dots,g^{\Gs^{m-1}})$. Since $g^{\Gs^k}={p(x)}^ef^{\Gs^k}\in {p(x)}^e\mL(\ell P')$, we have $\nu_{\infty_{ij}}(g^{\Gs^k})\ge \nu_{\infty_{ij}}({p(x)}^e)\ge -(q-1)de$ for all $0\le k\le m-1$ and each infinite place $\infty_{ij}$. Let $C$ be the $m\times m$ matrix with $(k,l)$ entry equal to $\Gs^k(\Gl^l)$. Let $C_i$ be the matrix obtained from $C$ by replacing  the $i's$ column with the column vector $\bg$. Then we have $A_i=\det(C_i)/\det(C)$.

Since $\nu_{\infty_{ij}}(\Gs^k(\Gl^l))\ge -1$, we have $\nu_{\infty_{ij}}(\det(C_i))\ge -(m-1)-(q-1)de$. As $\det(C)^2=\pm P$, we have $\nu_{\infty_{ij}}(\det(C))=-(q-1)d/2$. Thus, we have  $\nu_{\infty_{ij}}(A_i)=\nu_{\infty_{ij}}(\det(C_i)/\det(C))=-(m-1)-(q-1)de+(q-1)d/2$. This implies that $\deg(A_i)\le (m-1)/(q-1)+de+d/2$.
The proof is completed.
\end{proof}

We next discuss how to evaluate a function at a
place of higher degree. Let $g(x)$ be an irreducible polynomial of
degree $r$ and it splits completely in $K(\GL_{p(x)})$. By the Kummer
Theorem~\cite[Theorem III.3.7]{stich-book}, the polynomial
$\Gl^{p(x)}/x$ is factorized into $m$ product of linear factors over
$\F_q[x]/(p(x))\simeq\F_{q^r}$. Let $\Gl-\Ga$ be a linear factor, then
there is a place $Q$ of degree $r$ of $K(\GL_{p(x)})$. To evaluate a
function $f(x,\Gl)\in \mL(\ell P')$ at $Q$, we can simple compute
$f(\bar{x},\Ga)$, where $\bar{x}$ is the residue class of $x$ in
$\F_q[x]/(p(x))$. It is clear that the complexity of this evaluation
takes time $\text{poly}(q,m,r)$. The above analysis gives the
following result.

\begin{lemma}\label{lem:2.4} Let  ${p(x)}$ be a monic irreducible polynomial of degree $d$ and let $\Gl$ be a generator of $\GL_{p(x)}$. Let $P'$ be the place of  $K(\GL_{p(x)})$ lying over ${p(x)}$. Then  for a place $Q$ of  $K(\GL_{p(x)})$ of degree $r$ that is completely splitting over $K$, the evaluation of a function of
$f(x,\Gl)\in \mL(\ell P')$ at $Q$ can be computed in $\text{poly}(q,m,r)$ time.
\end{lemma}

\noindent {\bf Computing bases.} Our next goal is the following claim, which states that bases for the requisite bases for our construction can be efficiently computed.

Assume that $p(x)$ is a monic primitive polynomial of degree $d$ in $\F_q[x]$. Let $\Gl$ be a generator of  $\GL_{p(x)}$. Then we have the following facts:
\begin{itemize}
\item Every nonzero function $f$ in $\mL(D)$ has the form
\begin{equation}\label{eq:5.1}
f=p(x)^e\sum_{i=0}^{m-1}a_i(x)\Gl^i,
\end{equation}
where $e\ge 0$ and $a_i(x)\in\F_q[x]$ and not all $a_i(x)$ are divisible by $p(x)$.
\item The principal divisor $(\Gl)$ is
\begin{equation}\label{eq:5.2}
(\Gl)=P'+\sum_{i=1}^{d}\sum_{j=1}^{q^{i-1}}((q-1)(d-i)-1)\infty_{ij},
\end{equation}
where $\{\infty_{ij}\}_{1\le i\le d,1\le j\le q^{i-1}}$ is the set of all places lying over $\infty$ of $\F_q(x)$.
\end{itemize}

Let $f$ be a function given in \eqref{eq:5.1}. To show that $f$ belongs $\mL(D)$, it is sufficient to check that $\nu_{P'}(f)\ge -\ell/d$ and $\nu_{\infty_{ij}}(f)\ge 0$ for all places $i,j$.

 Let $i_0$ be the smallest number in $[0,m-1]$ such that $a_i(x)$ is not divisible by $p(x)$.  Then we have  $\nu_{P'}(f)=-em+i_0$. Thus, we have $-em+i_0\ge -\left\lceil\frac{2\g+m-1}d\right\rceil=-m+\left\lfloor\frac{m-q+1}{d(q-1)}\right\rfloor$. This implies that either $e=0$ (in this case $f\in\F_q$) or $e=1$ and $i_0\ge \left\lfloor\frac{m-q+1}{d(q-1)}\right\rfloor$.

To consider $\nu_{\infty_{ij}}(f)$, we note that $t_i:=(x^{d-i}\Gl)^{-1}$ is a local parameter of $\infty_{ij}$ for all $i,j$. Assume that $0=\Gl^{p(x)}/\Gl=\Gl^m+c_{m-1}(x)\Gl^{m-1}+\cdots+c_1(x)\Gl+c_0(x)\in\F_q[x][\Gl]$. Then we get an equation
\begin{equation}\label{eq:5.2a}
x^{-m(d-i)}t_i^{-m}+x^{-(m-1)(d-i)}c_{m-1}(x)t_i^{-(m-1)}+\cdots+x^{-(d-i)}t_i^{-1}+c_0(x)=0.
\end{equation}
Let the local expansion of $x$ at ${\infty_{ij}}$ be \begin{equation}\label{eq:5.3}
\sum_{k=1-q}^{-1}\Ga_kt_i^{-k}+a(t_i)\end{equation}
 for some $\Ga_i\in\F_q$ and $a(x)\in\F_q[x]$.
 Substituting $x$ with local expansion of \eqref{eq:5.3} into \eqref{eq:5.2a} to solve $\Ga_k$. Then  substituting  \eqref{eq:5.3} into \eqref{eq:5.1} get
\begin{equation}\label{eq:5.4} f=\sum_{k=-r}^{-1}\Gb_kt_i^{-k}+b(t_i)\end{equation}
for some integer $r\ge 1$, $\Gb_k\in \F_q$ and $b(x)\in\F_q[x]$. Note that $\Gb_k$ is a linear combination of coefficients of $a_i(x)$.

The genus of the function field $K(\GL_{p(x)})$ is $\g=\frac12\left(d-2+\frac{q-2}{q-1}\right)(q^d-1)+1$. Put
$D=\left\lceil\frac{2\g+m-1}d\right\rceil P'.$ Let $Q'$ be the unique place of $K(\GL_{p(x)})$ lying over $x$.
It is clear that $\ell=\deg(D)\ge 2\g+m$. Thus, $\dim_{\F_q}(D-Q')=\deg(D)-m-\g+1$. Choose $\mV\subseteq\mL(D)$ such that $\mV$ and $\mL(D-Q')$ are a direct sum of $\mL(D)$.

In conclusion, $f$ in the form \eqref{eq:5.1} belongs to $\mL(D)$ if and only if (i) (a) $f\in\F_q$ or (b) $e=1$ and $a_i(x)$ is divisible by $p(x)$ for all $0\le i<\left\lfloor\frac{m-q+1}{d(q-1)}\right\rfloor$; (ii) the local expansion of $f$ in \eqref{eq:5.4} satisfies $\Gb_k=0$ for all $-r\le i\le d(q-1)+1$. Furthermore, $f$ in the form \eqref{eq:5.1} belongs to $\mL(D-Q')$ if and only if in addition $f$ satisfies that $a_i(x)$ is divisible by $x$ for all $0\le i\le m-1$.

To determine $f$, it is equivalent to finding $a_i(x)$. We can solve $a_i(x)$ through a homogenous equation system of about $m^2$ variables that are coefficients of $a_i(x)$. Therefore, one can find a basis of $\mV$ in $\text{poly}(q,m)$ time. Summering the above analysis gives us Lemma~\ref{lem:2.5}

\begin{lemma}\label{lem:2.5} Let  ${p(x)}$ be a monic primitive polynomial of degree $d$ and let $\Gl$ be a generator of $\GL_{p(x)}$. Let $P',Q'$ be the places of  $K(\GL_{p(x)})$ lying over ${p(x)}$ and $x$, respectively. Put $D=\left\lceil\frac{2\g+m-1}d\right\rceil P'$ with $m=q^d-1$. Then a basis of a vector space $\mV$ satisfying $\mV\oplus\mL(D-Q')=\mL(D)$ can be computed in $\text{poly}(q,m)$ time.
\end{lemma}

\section{Construction of subspace design}\label{sec:3}
\subsection{Moore determinant}
The main purpose of this subsection is to provide a function, namely the determinant of a ``Moore" matrix, that is guaranteed to be non-zero when $s$ functions $f_1,f_2,\dots,f_s$ in a function field $F/K$ are linearly independent over $\F_q$. This will provide the necessary generalization of the fact that the folded Wronskian is non-zero when $f_1,\dots,f_s \in \F_q[X]$ of degree less than $(q-1)$ are linearly independent over $\F_q$.

\begin{lemma}\label{lem:3.1}
Let  $F/K$ be a finite field extension.    Suppose that  $Q'$ is a place of $F$ lying above a rational place $Q$ of $K$. Let $D$ be a positive devisor of $F$ with $Q'\not\in\supp(D)$. If $\mV$ is an $\F_q$-subspace of $\mL(D)$ such that $\mV\cap\mL(D-Q')=\{0\}$,
then $f_1,\ldots,f_s\in \mV$ are $\F_q$-linearly independent if and only if they are linearly independent over $K$.
\end{lemma}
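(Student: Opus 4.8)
The statement has two directions, but only one is substantive. If $f_1,\ldots,f_s$ are linearly independent over $K$, then since $\F_q \subseteq K$, they are trivially linearly independent over $\F_q$; this handles the ``if'' direction with nothing to prove. The real content is the contrapositive of the ``only if'' direction: I would assume $f_1,\ldots,f_s \in \mV$ are linearly \emph{dependent} over $K$ and show they must already be linearly dependent over $\F_q$. The plan is to take a $K$-linear dependence with as few nonzero coefficients as possible and argue that, after clearing denominators and normalizing, all the coefficients can be forced to lie in $\F_q$ — using the hypothesis $\mV \cap \mL(D - Q') = \{0\}$ to control what happens at the place $Q'$.

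\textbf{Main steps.} Suppose $\sum_{j \in S} c_j f_j = 0$ with $c_j \in K$ not all zero, $S$ a minimal-size support. Rescale so that the coefficients $c_j$ are elements of the valuation ring $O_Q$ of $K$ at the rational place $Q$, with at least one $c_j$ being a unit (i.e., $\nu_Q(c_j) = 0$ for some $j$, and $\nu_Q(c_j) \ge 0$ for all $j$); this is possible because $Q$ is a place of $K$ and we can multiply through by a suitable power of a uniformizer. Now reduce the relation modulo $Q'$: since each $f_i \in \mL(D)$ and $Q' \notin \supp(D)$, we have $\nu_{Q'}(f_i) \ge 0$, so $f_i$ lies in the local ring $O_{Q'}$ and has a well-defined residue $f_i(Q') \in F_{Q'}$. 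Because $Q'$ lies over $Q$, the residue field $F_{Q'}$ contains the residue field of $Q$, which — $Q$ being rational — is just $\F_q$; and each $c_j$, lying in $O_Q$, reduces to a scalar $\bar{c}_j \in \F_q$. Reducing $\sum_{j\in S} c_j f_j = 0$ modulo $Q'$ gives $\sum_{j \in S} \bar c_j\, f_j(Q') = 0$ in $F_{Q'}$, with the $\bar c_j \in \F_q$ not all zero (the unit coefficient survives).

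\textbf{Closing the argument.} The final step is to promote this relation among the residues $f_j(Q')$ to an actual $\F_q$-relation among the $f_j$ themselves. Consider $g := \sum_{j\in S} \bar c_j f_j \in \mV$ (an $\F_q$-linear combination, since $\bar c_j \in \F_q$). By the previous paragraph $g(Q') = 0$, i.e.\ $\nu_{Q'}(g) \ge 1$, which together with $g \in \mL(D)$ says precisely that $g \in \mL(D - Q')$. But $g \in \mV$, and $\mV \cap \mL(D - Q') = \{0\}$ by hypothesis, so $g = 0$ — and this is a nontrivial $\F_q$-linear dependence among $f_1,\ldots,f_s$, as desired.

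\textbf{Anticipated obstacle.} The one point that needs care is the normalization step: ensuring we can choose the $K$-coefficients $c_j$ to lie in $O_Q$ with at least one of them a $Q$-unit, \emph{and} that this normalization is compatible with the residue map at $Q'$ behaving as a ring homomorphism $O_{Q'} \to F_{Q'}$ that restricts on $O_Q$ to the residue map $O_Q \to \F_q$ (using that $Q$ is rational so its residue field is exactly $\F_q$, which is why $\bar c_j$ genuinely lands in $\F_q$ rather than some extension). The minimality of the support $S$ is not strictly needed for this argument — only the normalization and the direct-sum hypothesis are — so I would likely drop it, but I mention it in case one wants to additionally conclude the $\F_q$-relation has the same support. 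Everything else is a routine valuation-theoretic bookkeeping exercise.
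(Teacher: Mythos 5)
Your proof is correct, and it takes a genuinely different and shorter route than the paper's. The paper proceeds by supposing a nontrivial $K$-relation $\sum_i A_i f_i = 0$, invoking the Strong Approximation Theorem to arrange that each $A_i$ has $Q$ as its only possible pole (so that $A_i \in \mL(aQ)$), choosing an $\F_q$-basis $y_1,\dots,y_r$ of $\mL(aQ)$ with strictly increasing pole orders at $Q$, expanding $A_i = \sum_j a_{ij} y_j$ with $a_{ij} \in \F_q$, and rewriting the relation as $\sum_j \bigl(\sum_i a_{ij} f_i\bigr) y_j = 0$. Then, using $\mV \cap \mL(D-Q') = \{0\}$, each inner sum is either zero or has $\nu_{Q'}$-valuation exactly $0$, so the terms $\bigl(\sum_i a_{ij} f_i\bigr) y_j$ with nonzero inner sum all have distinct $Q'$-valuations (inherited from the distinct $\nu_{Q'}(y_j)$); this forces every inner sum to vanish, hence every $a_{ij}=0$ by $\F_q$-independence of the $f_i$, a contradiction.

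Your approach dispenses with the Strong Approximation step and the auxiliary basis of $\mL(aQ)$ entirely: you merely rescale the $K$-coefficients by a power of a local uniformizer at $Q$ so that they all lie in $O_Q$ with at least one a unit, then reduce the entire relation modulo $Q'$. Since the $f_j$ lie in $O_{Q'}$ (as $Q' \notin \supp(D)$) and $O_Q \subseteq O_{Q'}$ with the residue map restricting on $O_Q$ to $O_Q \to F_Q = \F_q$ (here rationality of $Q$ is used), you obtain a nontrivial $\F_q$-relation $\sum_j \bar c_j\,\overline{f_j} = 0$ in $F_{Q'}$. Setting $g = \sum_j \bar c_j f_j \in \mV$, this says $\nu_{Q'}(g) \ge 1$, hence $g \in \mV \cap \mL(D-Q') = \{0\}$, giving the desired $\F_q$-dependence. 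This is cleaner: the paper's use of Strong Approximation is stronger than needed, since the argument ultimately only cares about behaviour at the single place $Q$ (equivalently $Q'$), and your rescaling at $Q$ is all that is required to make the reduction well-defined. Both proofs hinge on the same two ingredients --- the compatibility of $\nu_{Q'}$ with $\nu_Q$ on $K$, and the direct-sum hypothesis $\mV \cap \mL(D-Q') = \{0\}$ --- but yours applies them more economically. Your anticipated obstacle is not actually an obstacle: the normalization is immediate (divide by $t^a$ where $t$ is a $Q$-uniformizer and $a = \min_j \nu_Q(c_j)$), and compatibility of the residue maps is standard. As you note, minimality of the support is unnecessary.
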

\begin{proof} The ``if" part is clearly true. Now assume that  $f_1,\ldots,f_s\in \mV$ are $\F_q$-linearly independent. Suppose that there exist functions $A_1,\ldots,A_s\in K$ such that not all of them are equal to $0$ and
\begin{equation}\label{eq:3.1}
\sum_{i=1}^{s}A_if_i=0.
\end{equation}
By the Strong Approximation Theorem \cite[Theorem I.6.4]{stich-book},
we can multiply $A_i$ with a common nonzero function $B$ in $K$ such
that the only possible pole of $A_iB$ is $Q$ for all
$i=1,2,\dots,s$. Thus, without loss of generality, we may assume that
$\nu_P(A_i)\ge 0$ for all places $P\neq Q$ of $K$.  Let
$a=\max\{-\nu_Q(A_i):\; A_i\neq0,\; 1\le i\le s\}$. Then we have
$A_i\in\mL(aQ)\subset K$ for all $1\le i\le s$. Since $Q$ is a
rational place, one can find an $\F_q$-basis $y_1,\ldots,y_r$ of
$\mL(aQ)$ such that the pole orders $-\nu_Q(y_j)$ are strictly
increasing.

Thus, $A_i$ can be expressed as $\sum_{i=1}^ra_{ij}y_j$ for some $a_{ij}\in\F_q$. We rewrite ~\eqref{eq:3.1} into the following identity
\begin{equation}\label{eq:3.2}
\sum_{j=1}^{r}\left(\sum_{i=1}^{s}a_{ij}f_i\right)y_j=0.
\end{equation}
As $\sum_{i=1}^{s}a_{ij}f_i\in \mV\subseteq \mL(D)$,  and $\mV \cap \mL(D-Q')=\{0\}$, we know that either $\sum_{i=1}^{s}a_{ij}f_i=0$  or
\[\nu_{Q'}\left(\sum_{i=1}^{s}a_{ij}f_i\right)=0, \quad \mbox{and hence}\quad \nu_{Q'}\left(\left(\sum_{i=1}^{s}a_{ij}f_i\right)y_j\right)=\nu_Q(y_{j})e(Q'|Q), \] where $e(Q'|Q)$ denotes the ramification index of $Q'$ over $Q$. As the $\nu_{Q'}(y_j)$ for $j=1,2,\dots,s$ are distinct, this implies that $\sum_{i=1}^{s}a_{ij}f_i=0$ for all $1\le j\le r$. Therefore, $a_{ij}=0$ for all $1\le i\le s$ and $1\le j\le r$ since $f_1,f_2,\dots,f_s$ are $\F_q$-linearly independent. So $A_1=\cdots=A_s=0$. This is a contradiction and the proof is completed.
\end{proof}

The above lemma provides a sufficient condition  under which $\F_q$-linear independence of a set $f_1,\ldots,f_s\in \mL(D)$ of functions is equivalent to $K$-linear independence. Now we give an alternative condition although we will mainly use Lemma \ref{lem:3.1} in this paper.

\begin{lemma}\label{lem:3.2}
Let $F/K$ be a finite field extension of degree $n<+\infty$.   Suppose that there exists a rational place $Q$ in $K$ such that
there is only one place $Q'$ of $F$ lying above $Q$. Let $D$ be a positive divisor of $F$ with $Q'\not\in\supp(D)$ and $\deg(D)< n$.
Then $f_1,\ldots,f_s\in \mL(D)$ are $\F_q$-linearly independent if and only if they are  linearly independent over $K$.
\end{lemma}
\begin{proof} The ``if" part is clear. Now assume that $f_1,\ldots,f_s\in \mL(D)$ are $\F_q$-linearly independent.
Suppose that there would exist functions $A_1,\ldots,A_s\in K$ such that not all $A_i$ were not zero and
\begin{equation}\label{eq:3.3}
\sum_{i=1}^{s}A_if_i=0.
\end{equation}
 We are going to derive a contradiction.

As in the proof of Lemma \ref{lem:3.1}, we may assume that $\nu_P(A_i)\ge 0$ for all places $P\neq Q$ of $K$.  Let $a=\max\{-\nu_Q(A_i):\; A_i\neq0,\; 1\le i\le s\}$. Then we have $A_i\in\mL(aQ)\subset K$ for all $1\le i\le s$. Since $Q$ is a rational place, one can  find an $\F_q$-basis  $y_1,\ldots,y_r$ of $\mL(aQ)$  such that the pole orders $-\nu_Q(y_j)$ are strictly  increasing as $j$ increases from $1$ to $r$.

Thus, $A_i$ can be expressed as $\sum_{i=1}^ra_{ij}y_j$ for some $a_{ij}\in\F_q$. We rewrite ~\eqref{eq:3.3} into the following identity
\begin{equation}\label{eq:3.4}
\sum_{j=0}^{r}\left(\sum_{i=1}^{s}a_{ij}f_i\right)y_j=0.
\end{equation}
Assume that $b$ is the largest index such that $\sum_{i=1}^{s}a_{ib}f_i\neq0$. Such an index must exist as not all $a_{ij}$'s are $0$, and $f_1,f_2,\dots,f_s$ are linearly independent over $\F_q$.
Then the above identity becomes
\begin{equation}\label{eq:3.5}
-\sum_{j=0}^{b-1}\left(\sum_{i=1}^{s}a_{ij}f_i\right)y_j=\left(\sum_{i=1}^{s}a_{ib}f_i\right)y_b.
\end{equation}
Since $Q'$ is the unique place lying above $Q$, we have $e(Q'|Q)\deg(Q')=n$.   Then, the fact that $\sum_{i=1}^{s}a_{ij}f_i\in \mL( D)$ implies that either $\sum_{i=1}^{s}a_{ij}f_i=0$ or
 $\nu_{Q'}\left(\sum_{i=1}^{s}a_{ij}f_i\right)\leq \frac{\deg(D)}{\deg(Q')}<e(Q'|Q)$.
Therefore, the right hand side of \eqref{eq:3.5} gives
\begin{eqnarray*}\nu_{Q'}\left(\left(\sum_{i=1}^{s}a_{ib}f_i\right)y_b\right)&\le& \frac{\deg(D)}{\deg(Q')}+\nu_{Q}(y_b)e(Q'|Q)\\
&<& e(Q'|Q)+(\nu_{Q}(y_{b-1})-1)e(Q'|Q)=\nu_{Q}(y_{b-1})e(Q'|Q),\end{eqnarray*}
while the left hand side of \eqref{eq:3.5} gives
\[\nu_{Q'}\left(-\sum_{j=0}^{b-1}\left(\sum_{i=1}^{s}a_{ij}f_i\right)y_j\right)\ge \min_{1\le j\le b-1}\nu_{Q'}(y_j)=\nu_{Q}(y_{b-1})e(Q'|Q).\]
This is a contradiction and the proof is completed.
\end{proof}

\begin{rmk}\label{rmk:2}{\rm The requirement of $\deg(D)<[F:K]=n$ in Lemma \ref{lem:3.2} makes it difficult to compute the dimension of $\mL(D)$ as the genus $\g$ of $F$ is usually larger than $n$. While in Lemma \ref{lem:3.1}, there is no such a requirement. When $\deg(D-Q')\ge 2\g-1$ and $\mV\oplus\mL(D-Q')=\mL(D)$, then by the Riemann-Roch theorem we have $\dim_{\F_q}(\mV)=\dim_{\F_q}(D)-\dim_{\F_q}(D-Q')=\deg(Q')$.
}\end{rmk}

For each element $\sigma\in\Aut(F/\F_q)$, denote by $F^{\Gs}$ the fixed field by $\langle \sigma \rangle$, i.e., $F^{\Gs}=\{x\in F:\; x^{\Gs}=x\}$.  By the Galois theory, if $\Gs$ has a finite order, then $F/F^{\Gs}$ is a Galois extension and  $\Gal(F/F^{\Gs})=\langle \sigma \rangle$.
\begin{defn}
(Moore Matrix) Let $F/\F_q$ be a field extension. Let $f_1,\ldots,f_s$ be elements of $F$, the Moore Matrix is defined by
$M(f_1,\ldots,f_n)=\begin{pmatrix}
    f_1 & \cdots & f_s \\
    f_1^q & \cdots & f_n^q \\
    \vdots & \ddots & \vdots \\
    f_1^{q^{s-1}} & \cdots & f_s^{q^{s-1}} \\
\end{pmatrix}$,

\iffalse
\begin{equation*}
M(f_1,\ldots,f_n)=\left(
  \begin{array}{ccc
    f_1 & \cdots & f_s \\
    f_1^q & \cdots & f_n^q \\
    \vdots & \ddots & \vdots \\
    f_1^{q^{s-1}} & \cdots & f_s^{q^{s-1}} \\
  \end{array}
\right).
\end{equation*}
\fi
%
It is a well-known fact that $f_1,\ldots,f_s$ are linearly independent over $\F_q$ if and only if the Moore Determinant $\det(M(f_1,\ldots,f_s))$ is nonzero.
\end{defn}

Now we generalize the above Moore matrix as follows.
\begin{defn}
\label{def:auto-moore}
{\em(Automorphism Moore Matrix)} Let $F/\F_q$ be a field extension. Let $\Gs$ be an automorphism in $\Aut(F/\F_q)$.  Let $f_1,\ldots,f_s$ be elements of $F$.  The $\sigma$-Moore matrix $M_{\sigma}(f_1,\ldots,f_s)$ is defined by
$M_\sigma(f_1,f_2,\dots,f_s) = \begin{pmatrix}
  f_1 & \cdots & f_s \\
                             f_1^\sigma & \cdots & f_s^\sigma \\
                             \vdots & \ddots & \vdots \\
                             f_1^{\sigma^{s-1}} & \cdots & f_s^{\sigma^{s-1}} \\
\end{pmatrix}$.
\iffalse
\begin{equation*}
M_\sigma(f_1,\ldots,f_s)=\left(
                           \begin{array}{ccc}
                             f_1 & \cdots & f_s \\
                             f_1^\sigma & \cdots & f_s^\sigma \\
                             \vdots & \ddots & \vdots \\
                             f_1^{\sigma^{s-1}} & \cdots & f_s^{\sigma^{s-1}} \\
                           \end{array}
                         \right)
\end{equation*}
\fi
\end{defn}
\begin{rmk}\label{rmk:1}{\rm
If $\Gs$ is the usual Frobenius, i.e., $f^{\Gs}=f^q$ for all $f\in F$. Then  we have that $\det(M_\sigma(f_1,\ldots,f_s))\neq 0$ if and only if $f_1,\ldots,f_n$ are linearly independent over $F^{\Gs}=\F_q$.}
\end{rmk}
Our next theorem can be seen as a generalization of the result given in Remark \ref{rmk:1}. 
\begin{lemma}\label{lem:3.3} Let $\Gs\in \Aut(F/\F_q)$.
Let $f_1,\ldots,f_s\in F$.
Then the $\sigma$-Moore determinant $\det(M_{\sigma}(f_1,f_2,\ldots,f_s))$ equals $0$ if and only if $f_1,\ldots,f_n$ are linearly dependent over $F^{\Gs}$.
\end{lemma}
\begin{proof}
Let us prove the ``if" part first.  Assume that $f_1,\ldots,f_s$ are linearly dependent over $F^{\Gs}$, then there exist functions
$A_1,\ldots,A_s\in F^{\Gs}$ such that $A_1,\ldots,A_s\in F^{\Gs}$ are not all zero and
\begin{equation}\label{eq:x2}
A_1f_1+\ldots+A_sf_s=0.
\end{equation}
For each $0\le i\le s-1$, let automorphism $\sigma^i$ act on both the sides of \eqref{eq:x2}, then we have
\begin{equation}\label{eq:x0}
A_1f_1^{\sigma^i}+\ldots+A_sf_s^{\sigma^i}=0.
\end{equation}
Note that in the above equation, we use the fact that $A_j^{\Gs^i}=A_j$.
The equation \eqref{eq:x0} implies that $(A_1,\ldots,A_s)$ is a nonzero solution of  $M_{\sigma}(f_1,\ldots,f_s)\bz={\bf 0}$. Hence, we conclude that $\det(M_{\sigma}(f_1,f_2,\ldots,f_s))=0$.

Next we prove the ``only if" part by induction. It is clearly true for the case where $s=1$. Now assume that it holds for $s-1$. Suppose that
$\det(M_{\sigma}(f_1,f_2,\ldots,f_s))= 0$ and $f_1,f_2,\ldots,f_s$ are linearly independent over $F^{\Gs}$. We will derive a contradiction.

\iffalse We will derive a contradiction. Since $f_1,f_2,\ldots,f_s$ are linearly independent over $F^{\Gs}$, $f_2,\ldots,f_s$ are also linearly independent over $F^{\Gs}$. By induction, we have that $0\neq \det(M_\sigma(f_2,\ldots,f_s))^\sigma=\det(M_\sigma(f_2^\sigma,\ldots,f_s^\sigma))$.
\fi

As $\det(M_{\sigma}(f_1,f_2,\ldots,f_s))= 0$,
 there exist  $A_1,\ldots,A_s\in F$ such that not all $A_1,\ldots,A_s$ are equal to $0$ and
 \begin{equation*}
 A_1f_1^{\sigma^{i}}+\ldots+A_sf_s^{\sigma^i}=0, \text{  for all}\; i\in \{0,\ldots,s-1\}.
 \end{equation*}
 Without loss of generality, we may assume that $A_1\neq 0$. Let $B_i=\frac{A_i}{A_1}\in F$ and we have
 \begin{equation}\label{eq:x3}
 f_1^{\sigma^{i}}+B_2f_2^{\sigma^i}\ldots+B_sf_s^{\sigma^i}=0, \text{  for }i\in \{0,\ldots,s-1\}.
 \end{equation}
Let $\sigma$ acts on both the sides of \eqref{eq:x3}, then
 \begin{equation}\label{eq:x4}
 f_1^{\sigma^{i+1}}+B_2^\sigma f_2^{\sigma^{i+1}}\ldots+B_s^\sigma f_s^{\sigma^{i+1}}=0, \text{  for }i\in \{0,\ldots,s-2\}.
 \end{equation}
By subtracting the $i$-th equation in \eqref{eq:x4} from the $(i+1)$-th equation in \eqref{eq:x3}, we obtain
\begin{equation}\label{eq:3.8}
(B_2-B_2^{\Gs}) f_2^{\sigma^{i+1}}+\ldots+(B_s-B_s^{\Gs})f_s^{\sigma^{i+1}}=0, \text{  for }i\in \{0,\ldots,s-2\}.
 \end{equation}
As $f_2,\dots,f_s$ are linearly independent over $F^\sigma$, by induction hypothesis, we have
\[ \det(M_\sigma(f_2,\dots,f_s)) \neq 0 , \ ~~ \text{which implies} ~~ \det(M_\sigma(f_2^\sigma,\dots,f_s^\sigma)) = \bigl( \det(M_\sigma(f_2,\dots,f_s))\bigr)^\sigma \neq 0 \ . \]
But then the linear dependence \eqref{eq:3.8} implies that $B_i-B_i^\sigma=0$ for all $2\le i\le s$. Thus, $B_i\in F^{\Gs}$ and \eqref{eq:x3} gives a non-trivial linear dependence of $f_1,f_2,\ldots,f_s$ over $F^{\Gs}$, a contradiction.
\end{proof}

%%appendix pointer
%
Combining Lemmas \ref{lem:3.1} with \ref{lem:3.3} gives the following.

%\vnote{Do we need to add the condition that $F/K$ is Galois with Galois group generated by $\Gs$ to the premise of these corollaries?}\xnote{  I added $K=F^{\Gs}$ since we refer to Lemmas \ref{lem:3.1} and \ref{lem:3.2}}

\begin{cor}\label{cor:3.4} Assume that the conditions in Lemma {\rm \ref{lem:3.1}} are satisfied with $K=F^{\Gs}$. Then for $f_1,f_2,\dots,f_s\in\mV\subseteq \mL(D)$, the  $\sigma$-Moore determinant $\det(M_{\sigma}(f_1,f_2,
\dots,f_s))= 0$ if and only if $f_1,f_2,\dots,f_s$ are linearly dependent over $\F_q$.
\end{cor}

Combining Lemmas \ref{lem:3.2} with \ref{lem:3.3} gives the following.
\begin{cor}\label{cor:3.5} Assume that the conditions in Lemma {\rm \ref{lem:3.2}} are satisfied with $K=F^{\Gs}$. Then for $f_1,f_2,\dots,f_s\in \mL(D)$, the  $\sigma$-Moore determinant $\det(M_{\sigma}(f_1,f_2,\ldots,f_s))= 0$ if and only if $f_1,f_2,\dots,f_s$ are linearly dependent over $\F_q$.
\end{cor}

\begin{rmk}\label{rmk:3} In \cite{GK-combinatorica},
the function field $F$ is the rational function $\F_q(x)$. The
automorphism $\Gs\in\Aut(F/\F_q)$ is given by $x\mapsto \Gg x$, where
$\Gg$ is a primitive element of $\F_q^*$. It is clear that the order
of $\Gs$ is $q-1$. The fixed field $F^{\Gs}$ is $\F_q(x^{q-1})$. Thus,
the degree $[F:F^{\Gs}]$ of extension $F/F^{\Gs}$ is $q-1$. Now for
$m<q-1$, we consider the Riemman-Roch space $\mL((m-1)\Pin)$, where
$\Pin$ is the unique pole of $x$. Then $\mL((m-1)\Pin)$ in fact
consists of all polynomials in $\F_q[x]$ of degree at most $m-1$. It
is clear that $((m-1)\Pin)^{\Gs}=(m-1)\Pin$. Furthermore, the rational
place $y-\Gg$ of $F^{\Gs}$ is fully inert in $F$, where
$y=x^{q-1}$. This is because $x^{q-1}-\Ga$ lies over $y-\Ga$ and
$x^{q-1}-\Ga$ has degree $q-1$. Thus, all conditions in Lemma
\ref{lem:3.2} are satisfied.  Therefore, by Corollary \ref{cor:3.5}
for a set of polynomials $f_1,f_2,\dots,f_s$ in $\F_q[x]$ of degree at
most $m-1$, the $\sigma$-Moore determinant
$\det(M_{\sigma}(f_1,f_2,\ldots,f_s))= 0$ if and only if
$f_1,f_2,\dots,f_s$ are linearly dependent over $\F_q$. This is exact
the result of Lemma 12 of \cite{GK-combinatorica}. Note that the Moore
determinant is called a folded Wronskian determinant in
\cite{GK-combinatorica}.
\end{rmk}

\subsection{Construction}
 Let $\Gs\in\Aut(F/\F_q)$ be an automorphism of a finite order. Let
 $D$ be a divisor of $F$ such that $D^{\Gs}=D$. Assume that all the
 conditions in Lemma \ref{lem:3.1} are satisfied. Recall
 $\mV\subseteq\mL(D)$ such that $\mV\cap\mL(D-Q)=\{0\}$.

For each place $P\in \PP_F$  such that $P\not\in\supp(D)$ and $P,P^{\Gs^{-1}},\dots, P^{\Gs^{-(t-1)}}$ are distinct, we define the subspace $\mH_P$:
\begin{equation}
\label{eq:ag-subspace}
\mH_P=\{f\in \mV: f(P^{\sigma^{-i}})=0 \text{  for each }i\in \{0,\ldots,t-1\}\}=\mV\cap \mL\left(D-\sum_{i=0}^{t-1}P^{\sigma^{-i}}\right).
\end{equation}
Recall that $f(P)$ is defined to be the residue class of $f$ in the residue field $O_P/P$. Hence, it is clear that
\[\dim_{\F_q}(\mH_P)\ge \dim_{\F_q}(\mV)+\dim_{\F_q}\left(D-\sum_{i=0}^{t-1}P^{\sigma^{-i}}\right)  -\dim_{\F_q}(D)\ge \dim_{\F_q}(\mV)-t\deg(P).\]
 Let $f(P)^{\Gs}=f(P)^{q^e}$ for some integer $e\ge 0$. Thus, we have
$f^{\Gs^i}(P^{\Gs^i})=f(P)^{\Gs^i}=f(P)^{q^{ei}}$ for all integers $i\ge 0$.

Define $S_P=\{P^{\sigma^{-i}}:\; i\in \{0,\ldots,t-1\}\}$, and denote by $\mF_r$ a set of places $P$ with degree $r$ such that $S_P$ are disjoint and $|S_P|=t$.

\begin{theorem}\label{thm:3.6} For any integers $s, t$ with $1\le s\le t$,
the collection $(\mH_P)_{P\in \mF_r}$ of subspaces of $\mV$, each of codimension at most $rt$, is an $\left(s,\frac{\ell s}{r(t-s+1)}\right)$ strong subspace design, where $\ell=\deg(D)$.
\end{theorem}
\begin{proof}
Let $\mW\subseteq \mV$ be an $\F_q$-subspace of dimension $s$. Let $f_1,\ldots,f_s$ be a basis for $\mW$.  Denote the dimension $\dim_{\F_q}(\mW\cap \mH_P)$ by $d_P$. Let $\{g_1,\dots,g_{d_P}\}$ be a basis of  $\mW\cap \mH_P$. Extend this basis to a basis $\{g_1,\dots,g_{d_P}, g_{d_P+1},\dots,g_s\}$ of $\mW$. Then it is clear that
\begin{equation}\label{eq:3.10}
\det(M_\sigma(f_1,\ldots,f_s))=b\det(M_\sigma(g_1,\ldots,g_s))
\end{equation}
for some $b\in \F_q^*$.

For any $g\in \mW\cap \mH_P$ and any $i, j$ with $0\le i\le s-1$ and $0\le j\le t-s$, we have $g(P^{\Gs^{-(i+j)}})=0$, i.e.,
\begin{equation}\label{eq:3.11}0=(g(P^{\Gs^{-(i+j)}}))^{q^{ei}}=(g(P^{\Gs^{-(i+j)}}))^{\Gs^{i}}=g^{\Gs^i}(P^{\Gs^{-j}}).\end{equation}
By definition of  determinants, we have $\det(M_\sigma(g_1,\ldots,g_s)(P^{-j}))=\sum_{\pi\in S_s}{\rm sgn}(\pi)\prod_{i=0}^{s-1}g_{\pi(i)}^{\Gs^{i}}(P^{-j})$,
where $S_s$ is the symmetric group. By \eqref{eq:3.11}, $\nu_{P^{-j}}(g_{\pi(i)})\ge 1$ whenever $\pi(i)\in\{1,\dots,d_P\}$. This implies that $\nu_{P^{-j}}\left({\rm sgn}(\pi)\prod_{i=0}^{s-1}g_{\pi(i)}^{\Gs^{u}}\right)\ge d_P$ for all $\pi$. Hence, $\nu_{P^{-j}}(M_\sigma(g_1,\ldots,g_s))\ge d_P$ for all $j\in\{0,1,\dots,t-s\}$. In conclusion, we have $M_\sigma(f_1,\ldots,f_s)\in\mL\left(sD-\sum_{P\in\mF_r}\sum_{j=0}^{t-s}d_PP^{-j}\right)$.

As $M_\sigma(f_1,\ldots,f_s)$ is a nonzero function, we must have
\begin{equation*}
\ell s=\deg(sD)\geq \sum_{P\in \mF_r}d_Pr(t-s+1)\geq \sum_{P\in \mF_r}r(t-s+1)\dim(\mW\cap \mH_P).
\end{equation*}
The desired result follows.
\end{proof}
So far in this subsection, we made use of Lemma \ref{lem:3.1} and Corollary \ref{cor:3.4} for construction of subspace designs. We can also make use of Lemma \ref{lem:3.2} and Corollary \ref{cor:3.5} to construct subspace designs.
Let $D$ be a positive divisor  of $F$ such that $D^{\Gs}=D$ and $\deg(D)<[F:F^{\Gs}]$.
For each place $P\in \PP_F$  such that $P\not\in\supp(D)$ and $P,P^{\Gs^{-1}},\dots, P^{\Gs^{-(t-1)}}$ are distinct, we define the subspace $\mI_P$:
\begin{equation}
\mI_P=\{f\in \mL(D): f(P^{\sigma^{-i}})=0 \text{  for each }i\in \{0,\ldots,t-1\}\}.
\end{equation}
We present the following result without proof as it is very similar to the one of Theorem~\ref{thm:3.6}.
\begin{theorem}\label{thm:3.7} For any integers $s, t$ with $1\le s\le t$,
the collection $(\mI_P)_{P\in \mF_r}$ of subspaces of $\mL(D)$, each of codimension at most $rt$, is an $\left(s,\frac{\ell s}{r(t-s+1)}\right)$ strong subspace design, where $\ell=\deg(D)$.
\end{theorem}

\subsection{Picking the places indexing the subspaces}
\label{subsec:picking-places}
To obtain a large set $\mF_r$ of places which define the subspaces in Theorems \ref{thm:3.6} and \ref{thm:3.7}, we consider those places $P$ that split completely in $F/F^{\Gs}$. Thus,  $P,P^{\Gs^{-1}},\dots,$ $ P^{\Gs^{-(t-1)}}$ are distinct as long as $t\le [F:F^{\Gs}]={\rm ord}(\Gs)$.
\begin{lemma}
\label{lem:splitting}
Let $P$ be a place of degree $r$ in $F$ with $\gcd(r,[F:F^{\Gs}])=1$. If $P$ is unramified in  $F/F^{\Gs}$, then $P$  splits completely in $F/F^{\Gs}$.
\end{lemma}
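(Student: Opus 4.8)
The plan is to analyze the splitting behaviour of $P$ in the Galois extension $F/F^{\Gs}$ purely through the decomposition group. Let $n = [F:F^{\Gs}] = \mathrm{ord}(\Gs)$, which is the order of the cyclic group $G = \langle\Gs\rangle = \Gal(F/F^{\Gs})$. Fix a place $P_0$ of $F^{\Gs}$ lying below $P$, and let $P$ itself be one of the places of $F$ above $P_0$. Since $P$ is unramified over $P_0$, the decomposition group $G_P = \{\tau \in G : P^\tau = P\}$ is cyclic, generated by the Frobenius automorphism $\varphi_P$ of $P$ over $P_0$, and its order equals the residue degree $f(P \mid P_0) = [F_P : (F^{\Gs})_{P_0}]$. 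Because $G$ is cyclic, $G_P$ is the unique subgroup of $G$ of order $f(P\mid P_0)$; in particular $f(P\mid P_0) \mid n$.

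First I would pin down $f(P\mid P_0)$. Since $G = \langle\Gs\rangle$ has order $n$ and $\F_q$ is the full constant field of $F$, the residue field $(F^{\Gs})_{P_0}$ has some degree $r_0 = \deg(P_0)$ over $\F_q$, and $F_P$ has degree $r = \deg(P)$ over $\F_q$; hence $f(P\mid P_0) = r/r_0$, so in particular $f(P\mid P_0)$ divides $r$. Combining this with $f(P\mid P_0)\mid n$ from the previous paragraph gives $f(P\mid P_0) \mid \gcd(r,n)$. The hypothesis $\gcd(r,n)=1$ then forces $f(P\mid P_0) = 1$, i.e.\ the decomposition group $G_P$ is trivial.

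Finally I would translate $G_P = \{1\}$ into complete splitting. Since $P$ is unramified, the number of places of $F$ above $P_0$ is $[G : G_P] = n$, each with residue degree $1$ and ramification index $1$; equivalently $P, P^{\Gs^{-1}}, \dots, P^{\Gs^{-(n-1)}}$ are $n$ distinct places all lying over $P_0$, which is exactly the statement that $P$ splits completely in $F/F^{\Gs}$. I would also note that this gives the needed input for Theorems~\ref{thm:3.6} and~\ref{thm:3.7}: provided $t \le n$, the places $P^{\Gs^{-i}}$ for $0 \le i \le t-1$ are distinct, so $P \in \mF_r$.

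The only genuinely delicate point is the identification $f(P\mid P_0) = r / \deg(P_0)$ together with the fact that $\deg(P_0)$ divides $r$ — both hinge on $\F_q$ being the full constant field so that residue degrees over $\F_q$ multiply along the tower $(F^{\Gs})_{P_0} \subseteq F_P$; everything else is the standard dictionary between cyclic Galois groups and their subgroups. One should be slightly careful that ``$P$ unramified in $F/F^{\Gs}$'' is used exactly once, namely to guarantee $G_P$ is cyclic generated by Frobenius and that the count of places above $P_0$ is $[G:G_P]$ rather than something smaller.
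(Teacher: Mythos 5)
Your argument is correct and is essentially the same as the paper's: both establish that $f(P\mid P_0)$ divides both $r$ (via $\deg(P)=f(P\mid P_0)\deg(P_0)$) and $n=[F:F^{\Gs}]$, so coprimality forces $f(P\mid P_0)=1$, and then unramified plus trivial inertia degree gives complete splitting. The only cosmetic difference is that you route the divisibility $f(P\mid P_0)\mid n$ explicitly through the decomposition group $G_P\le G$, whereas the paper simply cites this as a standard fact about Galois extensions.
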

\begin{proof} Let $R$ be the place of $F^{\Gs}$ that lies under $P$, which has inertia degree $f(P|R)$. As $r=\deg(P)=f(P|R)\deg(R)$ and $f(P|R)|[F:F^{\Gs}]$, we must have $f(P|R)=1$ and $\deg(R)=r$. Since $P$ is unramified, the desired result follows.
\end{proof}
In view of the above result, we can choose $\mF_r$ as follows. Let $r$
be co-prime to $n:={\rm ord}(\Gs)$. Let $P_1,\dots,P_N$ be all
non-conjugate places of degree $r$ that are not ramified. Then for
each $i\in\{1,2,\dots,N\}$, $P_i,P_i^{\Gs},\dots,P_i^{\Gs^{n-1}}$ are
all distinct. Thus, we can form $\lfloor n/t\rfloor$ sets $S_{P_i},
S_{P_i^{\Gs^{-t}}},\dots, S_{P_i^{\Gs^{-t(\lfloor n/t\rfloor-1)}}}$
that are pairwise disjoint.  On the other hand, by \cite[Corollary
  5.2.10(a)]{stich-book} there are at least
$\frac{q^r}r-(2+7\g)\frac{q^{r/2}}r$ places of degree $r$, where $\g$
is the genus $F$.  Hence, if $r\gg\log_q(2+7\g)$ and not many places
of degree $r$ are ramified, we have roughly $\frac1{rt}q^r$ such sets
$S_P$. In fact, for our examples based on cyclotomic function fields in the next section, there are no
places of degree $r$ that are ramified.

\section{Subspace design from cyclotomic function fields}\label{sec:4}
In this section, we will present subspace design from the construction given in Section \ref{sec:3} by applying cyclotomic function fields. We start with the subspace design in a ambient space of smaller dimension.

\medskip \noindent {\bf The small dimension case.}
If $\deg(D)$ is smaller than $n=[F:F^{\Gs}]$ and $n$ is smaller than the genus $\g(F)$ of $F$,  in general it is hard to compute dimension of the Riemann-Roch space $\mL(D)$. Therefore, we cannot use the construction given in Theorem \ref{thm:3.7}. In this subsection, we apply Theorem \ref{thm:3.7} to the case where we can estimate the dimension of $\mL(D)$.

Let $F$ be the rational function field $\F_q(x)$. Let
$\Gs\in\Aut(F/\F_q)$ be given by $x\mapsto \Gg x$, where $\Gg$ is a
primitive element of $\F_q^*$. By Remark \ref{rmk:3} and Theorem
\ref{thm:3.7}, one can obtain the subspace design given in
\cite{GK-combinatorica}. Below we show that the subspace design given
in \cite{GK-combinatorica} can be realized by using cyclotomic
function fields.

Put $K=\F_q(x)$. Let $p_1(x)$ be a monic linear polynomial. For
instance, we can simply take $p_1(x)=x$. Then the cyclotomic function
field $F_1:=K(\Lambda_{p_1})$ is a cyclic extension over $K$ with
$\Gal(F_1/K)\simeq\F_q^*$. In fact, $F_1=K(\lambda)=\F_q(\Gl)$ with
$\lambda$ satisfying $\lambda^{q-1}+x=0$. Thus,
$K=\F_q(\Gl^{q-1})$. Let $\Gg$ be a primitive root of $\F_q$ and let
$\Gs\in\Gal(F/K)$ be defined by $\Gl^{\Gs}=\Gl^{\Gg}=\Gg\Gl$. This
gives the exactly the same function fields and automorphism $\Gs$ as
in Remark \ref{rmk:2}. Therefore, we conclude that this cyclotomic
function field also realizes the subspace design given in
\cite{GK-combinatorica}.

Next we consider a monic \emph{primitive} quadratic polynomial
$p_2(x)=x^2+\Ga x+\Gb$ with $\Ga,\Gb\in\F_q$.  Then the cyclotomic
function field $F_2:=K(\Lambda_{p_2})$ is a cyclic extension over $K$
with $\Gal(F_2/K)\simeq(\F_q[x]/(p_2)^*$. In fact, $F_2=K(\lambda)$
with $\lambda$ satisfying $\Gl^{q^2-1}+\Gl^{q-1}(x^q+x+\Ga)+x^2+\Ga
x+\Gb=0.$ (see \cite{MXY16}).  Let $\Gs$ be a generator of
$\Gal(F_2/K)$. Then by the Galois theory, the fixed field $F_2^{\Gs}$
is the rational function field $K=\F_q(x)$. The genus of the function
field $F_2$ is $\g(F_2)=\frac{(q-2)(q+1)}{2}$ \cite{hayes,MXY16}.

The zero  of $p_2(x)$ is the unique  ramified place in $\F_q(x)$ and it is totally ramified. Let $P'$ be the unique place of $F_2$ that lies over the zero of $p_2(x)$. Let $\ell$ be an even positive integer with $\ell<q^2-1$ and let $D=(\ell/2)P'$. Then $\deg(D) =\ell$ and $D^{\Gs}=D$. Furthermore, we know that the  the zero of $(x-\Ga)$ is fully inert in $F_2/K$.  Thus, all the conditions in Lemma \ref{lem:3.2} are satisfied. By Theorem \ref{thm:3.7}, we have the following result.
%\vnote{Stated it for $s' \le s$, in anticipation of dimension expander application - probably need to justify this.}
\begin{theorem}\label{thm:4.1}
For all positive integers $s,r,t,m$ and prime powers $q$ satisfying $s\leq t\leq m=\zeta q^2$ for some $\zeta\in(0,1/2]$, the above construction yields a collection of
$M=\Omega(\frac{q^r}{rt})$ spaces $\mI_1,\ldots,\mI_M\subset \F_q^m$, each of codimension $rt$, which forms an $\left(s',\frac{(1+1/(2\zeta))ms'}{r(t-s'+1)}\right)$ strong subspace design for all $s' \le s$.
\end{theorem}
\begin{proof}
Choose $\ell$ such that the dimension of $\mL((\ell/2) P')$ is $m=\zeta q^2$. By the Riemman-Roch Theorem, we have $\zeta q^2\ge \deg( (\ell/2) P')-\g(F_2)+1$, i.e.,  $\ell\leq \zeta q^2+g-1\leq (1/2+\zeta)q^2$. The desired result follows from Theorem \ref{thm:3.7}.
\end{proof}

\smallskip \noindent {\bf The large dimension case.}
In this subsection, we will make use of Theorem \ref{thm:3.6} due to large genus. Let $p(x)\in\F_q[x]$ be a monic primitive polynomial of degree $d\ge 2$. Consider the cyclotomic function field $F:=K(\Lambda_{p(x)})$, where $K$ is the rational function field $\F_q(x)$. Then $F/K$ is a Galois extension with $\Gal(F/K)\simeq(\F_q[x]/(p(x)))^*$. Thus, $\Gal(F/K)$ is a cyclic group of order $q^d-1$. Let $\Gs$ be a generator of this group. Then by the Galois theory, the fixed field $F^{\Gs}$ is the rational function field $\F_q(x)$.

The zero  of $p(x)$ is the unique  ramified place in $\F_q(x)$ and it is totally ramified. Let $P'$ be the unique place of $F$ lying over the zero of $p(x)$. Let $Q's$ be the unique place of $F$ that lies over the zero of $x$. Since $Q'$ is totally inert, we have $\deg(Q')=[F:F^{\Gs}]=q^d-1:=m$.

The genus of the function field $F$ is $\g=\frac12\left(d-2+\frac{q-2}{q-1}\right)(q^d-1)+1$. Put
$D=\left\lceil\frac{2\g+m-1}d\right\rceil P'.$ Then $\ell=\deg(D)\ge 2\g+m$ and hence, $\dim_{\F_q}(D-Q')=\deg(D-Q')-\g+1$. Choose $\mV\subseteq\mL(D)$ such that $\mV$ and $\mL(D-Q')$ are a direct sum of $\mL(D)$. Thus, we have $\mV\cap \mL(D-Q')=\{0\}$ and $\dim_{\F_q}(\mV)= \dim_{\F_q}(D)- \dim_{\F_q}(D-Q')=q^d-1=m$.

Thus, all the conditions in Lemma \ref{lem:3.1} are satisfied. By Theorem \ref{thm:3.6}, we have the following.

%\vnote{TODO: Justify the explicitness claim. Comment on construction time. Basis of $\mathcal{V}$ is handled in section~\ref{sec:computing-bases}. Need to remark about finding the evaluations of the basis elements at the degree $r$ places.}
%
\begin{theorem}\label{thm:4.2}
For all positive integers $s,r,t,d,m$ and prime powers $q$ satisfying $\gcd(r,m)=1$ and $s\leq t\leq m/r=(q^d-1)/r$, there is an explicit collection of
$M=\Omega(\frac{m \cdot q^r}{rt})$ spaces $\mH_1,\ldots,\mH_M\subset \F_q^m$, each of codimension at most $rt$, which forms an $(s',\frac{(d-1/(q-1))ms'}{r(t-s'+1)})$-strong subspace design for all $s' \le s$. Furthermore, the subspace design can be constructed in $\text{poly}(q,m,r)$ time.
\end{theorem}
\begin{proof}
  The subspace design property follows from Theorem \ref{thm:3.6}
  since $\ell=\deg(D)\le (d-1/(q-1))m$. The construction of the
  subspace design mainly involves finding a basis of $\mV$ and
  evaluations of functions at places of degree $r$. We have described
  how to compute a basis in Lemma \ref{lem:2.5} and how to evaluate a
  function a high degree place in Lemma \ref{lem:2.4}. The places of degree $r$ defining the subspaces in the subspace design can be computed as described in Section~\ref{subsec:picking-places}. We can enumerate over all degree $r$
  irreducible polynomials $R \in \F_q[x]$ by brute-force in $q^{O(r)}$
  time. None of these places are ramified, and by Lemma~\ref{lem:splitting} each of these places $R$ splits completely into $m$ places of degree $r$, say $\{P^{\Gs^{i-1}} \mid 1 \le i \le m\}$, in $F$.  So we can pick $b=\lfloor \frac{m}{t}\rfloor$ of these places $P,P^{\Gs^t},\dots,P^{\Gs^{(b-1)t}}$, and
 define a particular subspace of co-dimension $rt$ associated with each of them as in
  \eqref{eq:ag-subspace}.
\end{proof}

\smallskip
\noindent
By setting $t\approx 2s$ and $r\approx \lfloor \frac{\epsilon m}{2s}\rfloor$ in Theorem \ref{thm:4.2}, we obtain the Main Theorem \ref{thm:1.1}.

%\vnote{Commented out applications section for now}
%\vnote{Edited bib file; but need to add more citations in the intro and body of the paper}

\bibliographystyle{alpha}
\bibliography{sd}

%\bibliographystyle{plainurl}
%\bibliography{sd}

\end{document}